\theoremstyle{definition}
\newtheorem{definition}{\bf Definition}[section]
\newtheorem{theorem}{\bf Theorem}[section]
\newtheorem{proposition}{\bf Proposition}[section]
\DeclareMathOperator*{\argmaxA}{arg\,max}
\DeclareMathOperator*{\argminA}{arg\,min}
\newcommand{\Cov}{\mathrm{Cov}}
\begin{document}

\title{Risk-Aware Multi-Armed Bandit Problem with Application to Portfolio Selection}

\author{
Xiaoguang Huo$^{1}$ and Feng Fu$^{2,3}$}

\address{$^{1}$Department of Mathematics, Cornell University, Ithaca, NY 14850, USA\\
$^{2}$Department of Mathematics, Dartmouth College, Hanover, NH 03755, USA\\
$^{3}$Department of Biomedical Data Science, Geisel School of Medicine at Dartmouth, Lebanon, NH 03756, USA}

\subject{mathematical modeling, applied mathematics}

\keywords{multi-armed bandit, online learning, portfolio selection, graph theory, risk-awareness, conditional value-at-risk}

\corres{Xiaoguang Huo\\
\email{xh84@cornell.edu}\\
Feng Fu \\
\email{fufeng@gmail.com}\\}

\begin{abstract}
Sequential portfolio selection has attracted increasing interests in the machine learning and quantitative finance communities in recent years. As a mathematical framework for reinforcement learning policies, the stochastic multi-armed bandit problem addresses the primary difficulty in sequential decision making under uncertainty, namely the \textit{exploration} versus \textit{exploitation} dilemma, and therefore provides a natural connection to portfolio selection. In this paper, we incorporate risk-awareness into the classic multi-armed bandit setting and introduce an algorithm to construct portfolio. Through filtering assets based on the topological structure of financial market and combining the optimal multi-armed bandit policy with the minimization of a coherent risk measure, we achieve a balance between risk and return.
\end{abstract}


\begin{fmtext}
\section{Introduction}
Portfolio selection is a popular area of study in the financial industry ranging from academic researchers to fund managers. The problem involves determining the best combination of assets to be held in the portfolio in order to achieve the investor's objectives, such as maximizing the cumulative return relative to some risk measure. In the finance community, the traditional approach to this problem can be traced back to 1952 with Markowitz's seminal paper \cite{markowitz}, which introduces mean-variance analysis, also known as the modern portfolio theory (MPT), and suggests choosing the allocation that
\end{fmtext}
\maketitle
\noindent
maximizes the expected return for a certain risk level quantified by variance. On the other hand, sequential portfolio selection models have been developed in the mathematics and computer science communities. For example, Cover's universal portfolio strategy \cite{cover}, Helmbold's multiplicative update portfolio strategy \cite{helmbold}, and see Li \& Hoi \cite{survey} for a comprehensive survey. In recent years, with the unprecedented success of AI and machine learning methods evidenced by AlphaGo defeating the world champion and OpenAI's bot beating professional Dota players, more creative machine learning based portfolio selection strategies also emerged \cite{deep,sentiment}.

\indent
Including portfolio selection, many practical problems such as clinical trials, online advertising and robotics can be modeled as sequential decision making under uncertainty \cite{bayesian}. In such a process, at each trial the learner faces the trade-off between acting ambitiously to acquire new knowledge and acting conservatively to take advantage of current knowledge, which is commonly known as the \textit{exploration} versus \textit{exploitation} dilemma. Often understood as a single-state Markov Decision Process (MDP), the stochastic multi-armed bandit problem provides an extremely intuitive mathematical framework to study sequential decision making. 

An abstraction of this setting involves a set of $K$ slot machines and a sequence of $N$ trials. At each trial $t = 1,\dots,N$, the learner chooses to play one of the machines $I_t \in \{1,\dots,K\}$ and receives a reward $R_{I_t,t}$ drawn randomly from the corresponding fixed but unknown probability distribution $\nu_{I_t}$, whose mean is $\mu_{I_t}$. In the classic setting, the random rewards of the same machine across time are assumed to be independent and identically distributed, and the rewards of different machines are also independent. The objective of the learner is to develop a \textit{policy}, an algorithm that specifies which machine to play at each trial, to maximize cumulative rewards. A popular measure for the performance of a policy is the \textit{regret} after some $n$ trials, which is defined to be 
\begin{equation}
\xi(n) \stackrel{\text{def}}{=} \max_{\forall i \in [1,K]}\sum_{t = 1}^{n} R_{i,t} - \sum_{t = 1}^{n} R_{I_t,t}.
\end{equation}
However, in a stochastic model it is more intuitive to compare rewards in expectation and use \textit{pseudo-regret} \cite{regret}. Let $\mathnormal{T}_i(n)$ be the number of times machine $i$ is played during the first $n$ trials and let $\mu^* = \max\{\mu_1,\dots,\mu_K\}$. Then,
\begin{equation}
\widehat{\xi}(n) \stackrel{\text{def}}{=} n\mu^* - \mathbb{E}\sum_{t = 1}^{n}R_{I_t,t}  = \sum_{1\le i \le K,\, \mu_i < \mu^*}(\mu^* - \mu_i) \mathbb{E}[\mathnormal{T}_i (n)]
\end{equation}

Thus, the learner's objective to maximize cumulative rewards is then equivalent to minimizing regret. The asymptotic lower bound on the best possible growth rate of total regret is proved by Lai \& Robbins \cite{lai}, which is $\mathcal{O}(\log{n})$ with a coefficient determined by the suboptimality of each machine and the Kullback-Leibler divergence. Since then, various online learning policies have been proposed \cite{algorithm}, among which the UCB1 policy developed in Auer \textit{et al.} \cite{auer} is considered the optimal and will be introduced in detail in Section Methods and Model.

Although the classic multi-armed bandit has been well studied in academia, a number of variants of this problem are proposed to model different real world scenarios. For example, Agrawal \& Goyal \cite{agrawal} considers contextual bandit with a linear reward function and analyzes the performance of Thompson Sampling algorithm. Koulouriotis \& Xanthopoulos \cite{nonstationary} studies the non-stationary setting where the reward distributions of machines change at a fixed time. A more important variant is the risk-aware setting, where the learner considers risk in the objective instead of simply maximizing the cumulative reward. This variant is closely related to the portfolio selection problem, where risk management is an indispensable concern, and has been discussed in several papers. For example, Sani \textit{et al.} \cite{sani} studies the problem where the learner's objective is to minimize the mean-variance defined as $\sigma^2 - \rho\mu$ and proposes two algorithms, MV-LCB and ExpExp. In a similar setting, Vakili \& Zhao \cite{vakili} provides a finer analysis of the performance of algorithms proposed in Sani \textit{et al.} \cite{sani}. In addition, Vakili \& Zhao \cite{vakili2} extends this setting by considering the mean-variance and value-at-risk of total rewards at the end of time horizon. In a more generalized case, Zimin \textit{et al.} \cite{zimin} sets the objective to be a function of the mean and the variance $f(\mu,\sigma^2)$ and defines the $\varphi$-LCB algorithm that achieves desirable performance under certain conditions. Moreover, Galichet \textit{et al.} \cite{galichet} chooses the conditional value-at-risk to be the objective and proposes the MARAB algorithm.

These works serve as the inspiration for us to consider risk in the model, but they are not directly applicable to the portfolio selection problem, owing to the primary obstacle that these methods only choose the best single machine to play at each trial. To address this issue, a basket of candidate portfolios need to be first selected in the preliminary stage in a strategic and logical way. For example, Shen \textit{et al.} \cite{orthogonal} uses principle component analysis (PCA) to select candidate portfolios, namely the normalized eigenvectors of the covariance matrix of asset returns.\\
\indent
In our model, we first take a graph theory approach to filter and select a basket of assets, which we use to construct the portfolio. Then, at each trial we combine the single-asset portfolio determined by the optimal multi-armed bandit algorithm with the portfolio that globally minimizes a \textit{coherent} risk measure, the conditional value-at-risk. The rest of this paper is organized as follows. In Section Methods and Model, we formulate the portfolio selection problem in the multi-armed bandit setting, and describe our methodology in detail. In Section Results, we present our simulation results using the proposed method. In Section Discussions \& Conclusion we discuss results and also provide directions for future research.


\section{Methods and Model}
\subsection{Problem Formulation}
In this section, we modify the classic multi-armed bandit setting to model portfolio selection. Consider a financial market with a large set of assets, from which the learner selects a basket of $K$ assets to invest in a sequence of $N$ trials. At each trial $t = 1,\dots,N$, the learner chooses a portfolio $\boldsymbol{\omega}_t = {\left({\omega}_{1,t} , \dots , {\omega}_{K,t}\right)}^\top $ where $\omega_{i,t}$ is the weight of asset $i$. Since we only consider long-only and self-financed trading, we must have $\boldsymbol{\omega}_t \in W$ where $W = \{\boldsymbol{u} \in \mathbb{R}_{+}^{K} : \boldsymbol{u}^\top \boldsymbol{1} = 1\}$ and $\boldsymbol{1}$ is a column vector of ones. The returns of assets are then revealed at trial $t+1$ and denoted by $\boldsymbol{R}_{t} = {\left(R_{1,t} , \dots , R_{K,t}\right)}^\top$. In particular, the return for each asset $R_{i,t}$ is viewed as a random draw from the corresponding probability distribution $\nu_i$ with mean $\mu_i$ and can be simply defined as the log price ratio $R_{i,t} = \log\left(P_{i,t+1}/P_{i,t}\right)$, where we use the natural log and $P_{i,t}$, $P_{i,t+1}$ are the prices at trial $t$ and $t+1$. For the trading period from $t$ to $t+1$, the learner receives ${\boldsymbol{\omega}_t}^\top \boldsymbol{R}_{t}$ as the reward for his portfolio. The investment strategy of the learner is thus a sequence of $N$ mappings from the accumulated knowledge to $W$.\\
\indent
We make the following assumptions. First, we assume we always have access to historical returns $H_{i,t}$ of every asset $i$ in the market for $t = 1,\dots,\delta$. The historical return is defined similarly to  $R_{i,t}$ as the log of price ratio but corresponds to the time horizon immediately before our investment period. They are only used to estimate the correlation structure and risk level. Second, we make no assumption on the dependency of returns either across time or across assets. We only assume that for each trial $t$ and for all $i \in \{1,\dots,K\}$, $R_{i,t} \sim \nu_i$ and $H_{i,t} \sim \nu_i$ with a relatively small $\delta$. Note that the UCB1 algorithm we use later is proved to be optimal under a weaker assumption, $\mathbb{E}[R_{i,t} | R_{i,1}, \dots,R_{i,t-1}] = \mu_i$, allowing us to waive the assumptions in the classic setting \cite{auer}. Third, transaction costs and market liquidity will not be considered. See Model 1 for a summary of the problem.
\begin{algorithm}[!htb]
\SetAlgoLined
\Parameter{$\delta$, $N$}
\SetAlgorithmName{Model}{}

 Receive historical returns $H_{i,t}$ of each asset $i$ for $t = 1,\dots,\delta$\;
 
 Filter to select a basket of $K$ assets\;

 \For{$t = 1,\dots,N$}{
  Choose portfolio $\boldsymbol{\omega}_t = {\left({\omega}_{1,t} , \dots , {\omega}_{K,t}\right)}^\top$\;
  
  Observe $\boldsymbol{R}_t = {\left(R_{1,t} , \dots , R_{K,t}\right)}^\top$\ and receive reward ${\boldsymbol{\omega}_t}^\top \boldsymbol{R}_{t}$\;

 }
 \caption{Sequential Portfolio Selection Problem}
\end{algorithm}

\subsection{Portfolio Construction by Filtering Assets}
\indent

Graph theory has been popularly applied in various disciplines to model networks, where the vertices represent individuals of interest and the edges represent their interactions. For example, in evolutionary game theory, graphs are used to analyze the dynamics of cooperation within different population structures \cite{nowak,fumigration,fustrategy,antisocial,riskygame,conformity}. In financial markets, minimum spanning tree (MST) is accepted as a robust method to visualize the structure of assets \cite{mst}, allowing one to capture different market sectors from empirical data \cite{mantegna,topology,networks}.
\indent

For our purpose, since we have a large pool of assets, we first want to select a basket of $K$ to invest. Recall the return of each asset is $R_{i,t} = \log\left(P_{i,t+1}/P_{i,t}\right)$, where $P_{i,t}$ and $P_{i,t+1}$ are the prices at trial $t$ and $t+1$. Following Mantegna \cite{mantegna} and Mantegna \& Stanley \cite{mantegna2}, we use $\delta$ trials of historical returns to find the correlation matrix, whose entries are $$\rho_{i,j} \stackrel{\text{def}}{=} \frac{\langle H_i H_j \rangle - \langle H_i \rangle \langle H_j \rangle}{\sqrt{( \langle {H_i}^2 \rangle - {\langle H_i \rangle}^2 ) (\langle {H_j}^2 \rangle - {\langle H_j \rangle}^2 )}}$$ where $\langle \cdot \rangle$ is historical mean, namely $\langle H_i \rangle = \sum_{t = 1}^{\delta}H_{i,t}$ for each asset $i$ in the market. For $\delta$ small, we can improve our estimation by taking advantage of the shrinkage method in Ledoit \& Wolf \cite{ledoit}. We then define the metric distance between two vertices as $d_{i,j} \stackrel{\text{def}}{=} \sqrt{2(1 - \rho_{i,j})}$. The Euclidean distance matrix $\boldsymbol{D}$ whose entries are $d_{i,j}$ is then used to compute the undirected graph $G = \{V,E\}$, where $V$ is the set of vertices representing assets and $E$ is the set of weighted edges representing distance. To extract the most important edges from $G$, we construct the minimum spanning tree $T$. In particular, $T$ is the subgraph of $G$ that connects all vertices without cycle and minimizes total edge weights.
\indent

One way to classify vertices is based on their relative positions in the graph, central versus peripheral. In financial markets, this classification method turns out to have significant implications in \textit{systemic risk}, which is the risk that an economic shock causes the collapse of a chain of institutions \cite{schwarcz}. Several empirical studies suggest such risk can be associated with certain characteristics of the correlation structure of market. For example, Kritzman \textit{et al.} \cite{absorption} defines the \textit{absorption ratio} as the fraction of total variances explained by a fixed number of principal components, namely the eigenvectors of the covariance matrix, and shows this ratio increased dramatically during both domestic and global financial crisis including the housing bubble, dot-com bubble, the 1997 Asian financial crisis and so on. Drozdz \textit{et al.} \cite{max} finds a similar result and suggests that the maximum eigenvalue of the correlation matrix rises during crisis and exhausts the total variances. Hence, graph theory can be naturally applied to this setting and provides significant insights in managing systemic risk. In particular, Huang \textit{et al.} \cite{bipartite} gives an intuitive simulation of the contagion process of systemic risk on bipartite graph. Onnela \textit{et al.} \cite{onnela} shows that the minimum spanning tree of assets shrinks during crisis, which supports the above arguments on the compactness of the eigenvalues of correlation matrix. More importantly, Onnela \textit{et al.} \cite{onnela}, Pozzi \textit{et al.} \cite{pozzi} and Ren \textit{et al.} \cite{ren} suggest investing in the assets located on the peripheral parts of the minimum spanning tree can facilitates diversification and reduce the exposure to systemic risk during crisis.

\begin{figure}[!htb]
    \centering
    \includegraphics[scale = 0.72]{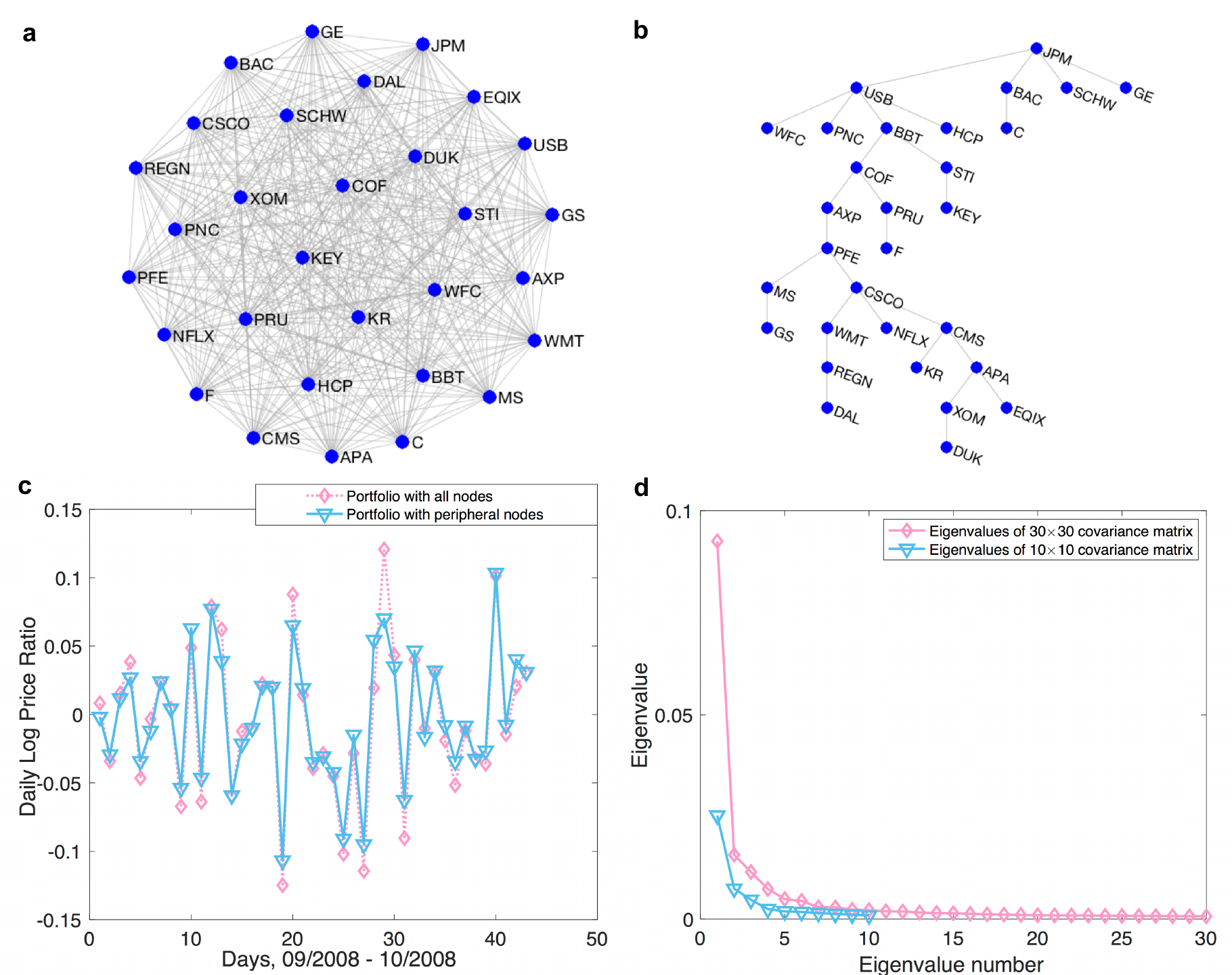}
    \caption{Portfolio selection based on minimum spanning tree. Shown are (a) the complete graph and (b) the corresponding minimum spanning tree constructed from the 30 selected S\&P 500 stocks during the period 09/2008 - 10/2008. Panel (c) plots the performance of the portfolio of 10 randomly selected vertices from the 14 leaves shown in (b). Panel (d) compares the eigenvalue spectrum of the covariance matrix of the 30 selected S\&P 500 stocks in (a) with that of 10 stocks randomly chosen from the peripheral nodes from the minimum spanning tree in (c). }\end{figure}

For our study, we select 30 S\&P 500 stocks, which consist of 15 financial institutions (JPM, WFC, BAC, C, GS, USB, MS, KEY, PNC, COF, AXP, PRU, SCHW, BBT, STI) and 15 randomly selected companies from other sectors (KR, PFE, XOM, WMT, DAL, CSCO, HCP, EQIX, DUK, NFLX, GE, APA, F, REGN, CMS). We use the daily close price of 44 trading days during the subprime mortgage crisis to construct the minimum spanning tree and investigate the advantage of investing in peripheral vertices using the equally-weighted portfolio strategy. Although the number of stocks is small, our results similarly show that investing in peripheral vertices can reduce loss during financial crisis (Fig. 1). Fig.~1(a) shows the complete graph of 30 stocks. Fig.~1(b) is the minimum spanning tree we obtain following the above method. Observe that this tree has a total of 14 leaves (WFC, C, GS, KEY, PNC, SCHW, KR, DAL, HCP, EQIX, DUK, NFLX, GE, F), and selecting from these leaves to construct portfolio almost always reduces the median daily loss compared to the portfolio with all vertices. For example, Fig.~1(c) provides the performance of the portfolio with 10 randomly selected vertices from the 14 leaves, which increases the median daily log price ratio from -0.0101 to -0.0079 and the median daily percentage return from -0.0095 to -0.0070. Furthermore, Fig.~1(d) shows that the eigenvalue spectrum of the covariance matrix becomes less compact. Finally, We acknowledge the dynamic nature of the market structure, but for simplicity this aspect will not be considered in our study.

\indent
Therefore, we select the $K$ most peripheral vertices from the minimum spanning tree $T$ as our basket of assets to invest. We note that for any graph $G$ with distinct edge weight, which is often the case for financial data with high precision, the minimum spanning tree $T$ is proved to be unique. Our selection of vertices tends to lie on the leaves for a star-like graph, on the two ends of the longest edge for a cycle, and on the corners for a lattice. Among the numerous centrality measures discussed in graph theory \cite{freeman}, we use the most straightforward measure and select the $K$ vertices with the least \textit{degree}. The value of $K$ is subjective and can be determined based on the learner's view of the economic state. Assuming $K$ assets are selected, we proceed to portfolio construction as described in what follows.

\subsection{Combined Sequential Portfolio Selection Algorithm}
\indent

We design a sequential portfolio selection algorithm by combining the optimal multi-armed bandit policy, namely the UCB1 proposed in Auer \textit{et al.} \cite{auer}, with the minimization of of a coherent risk measure, namely the conditional value-at-risk. Recall that the return $R_{i,t}$ of each asset $i$ is defined as the log price ratio, namely $R_{i,t} = \log{\left( P_{i,t+1}/P_{i,t} \right)}$. The UCB1 policy is defined as follows. First, select each asset once and observe return during the first $K$ trials. Then, for each trial select the asset that maximizes an estimated upper confidence bound of return with a certain confidence level. Precisely, at each trial $t$ we select
\begin{equation}
            {I_t}^* \hspace{1mm} \stackrel{\text{def}}{=} \hspace{1mm}
            \begin{cases*}
            \hspace{4mm} t & if $t \le K$ \\
            \hspace{1mm} \argmaxA\limits_{\forall i \in \{1,\dots,K\}}  \bar{R}_i (t) + \sqrt{\frac{2\log{t}}{\mathnormal{T}_i \left( t - 1 \right)}} & otherwise
            \end{cases*}
            \label{singlearm}
\end{equation}
where $\bar{R}_i (t)$ is the empirical mean of return for asset $i$ and recall $\mathnormal{T}_i \left( t - 1 \right)$ is the number of times asset $i$ has been selected during the past $t - 1$ trials. Theorem 2.1 below provided in Auer \textit{et al.} \cite{auer} proves the optimality of UCB1.

\begin{theorem}
\textit{(Auer \textit{et al.}, 2002) For all $K > 1$ assets whose the mean returns are in the support $[0,1]$, the regret of UCB1 algorithm after any number $n$ of trials satisfies $$\widehat{\xi} \left( n \right) \le \left[ 8 \sum_{i:\mu_i < \mu^*} \left( \frac{\log{n}}{\mu^* - \mu_i} \right) \right] + \left( 1+ \frac{\pi^2}{3} \right) \left[ \sum_{i = 1}^{K} \left(\mu^* - \mu_i \right) \right]$$ where recall $\mu_i$ is the mean return of asset $i$ and $\mu^* = \max{\{ \mu_1, \dots, \mu_K \}}$}.
\end{theorem}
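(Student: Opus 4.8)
The pseudo-regret decomposition already established in the text, $\widehat\xi(n)=\sum_{i:\mu_i<\mu^*}(\mu^*-\mu_i)\,\mathbb{E}[T_i(n)]$, reduces the whole statement to controlling the expected number of pulls $\mathbb{E}[T_i(n)]$ of each strictly suboptimal arm. Concretely, the plan is to prove that for every $i$ with $\mu_i<\mu^*$,
$$\mathbb{E}[T_i(n)]\le \frac{8\log n}{(\mu^*-\mu_i)^2}+1+\frac{\pi^2}{3},$$
and then substitute this into the decomposition: the first term produces $8\sum_{i:\mu_i<\mu^*}\log n/(\mu^*-\mu_i)$, while the constant term produces $(1+\pi^2/3)\sum_{i:\mu_i<\mu^*}(\mu^*-\mu_i)$, which I would relax to $(1+\pi^2/3)\sum_{i=1}^{K}(\mu^*-\mu_i)$ (the extra summands vanish) to match the stated form exactly.

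To bound $\mathbb{E}[T_i(n)]$ I would fix an integer threshold $\ell$ and split $T_i(n)\le \ell+\sum_{t=K+1}^{n}\mathbb{1}\{I_t^*=i,\,T_i(t-1)\ge \ell\}$. Writing $c_{t,s}=\sqrt{2\log t/s}$ for the exploration bonus and $*$ for an optimal arm, selection of arm $i$ at trial $t$ forces $\bar R_i(t)+c_{t,T_i(t-1)}\ge \bar R_*(t)+c_{t,T_*(t-1)}$. A short case analysis shows this can hold only if at least one of three events occurs at trial $t$: (a) $\bar R_*(t)\le \mu^*-c_{t,T_*(t-1)}$, the optimal mean is underestimated; (b) $\bar R_i(t)\ge \mu_i+c_{t,T_i(t-1)}$, arm $i$'s mean is overestimated; or (c) $\mu^*-\mu_i<2c_{t,T_i(t-1)}$, the gap is smaller than twice the current bonus. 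Taking $\ell=\lceil 8\log n/(\mu^*-\mu_i)^2\rceil$ rules out (c) whenever $T_i(t-1)\ge \ell$ and $t\le n$, since then $2c_{t,T_i(t-1)}\le \mu^*-\mu_i$. Hence on the summed event only (a) or (b) can trigger a pull of $i$.

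The delicate step, and the main obstacle, is bounding $\mathbb{P}[(a)]$ and $\mathbb{P}[(b)]$: the counts $T_*(t-1)$ and $T_i(t-1)$ are random, so a fixed-sample concentration inequality cannot be invoked directly. The standard remedy is a union bound over all admissible sample sizes, e.g. $\mathbb{P}[(a)]\le\sum_{s=1}^{t}\mathbb{P}[\bar R_{*,s}\le \mu^*-c_{t,s}]$, where $\bar R_{*,s}$ is the mean of the first $s$ pulls of the optimal arm. Because rewards lie in $[0,1]$, the Chernoff--Hoeffding inequality gives $\mathbb{P}[\bar R_{*,s}\le \mu^*-c_{t,s}]\le e^{-2sc_{t,s}^2}=t^{-4}$, and symmetrically for (b); under only the weaker hypothesis $\mathbb{E}[R_{i,t}\mid R_{i,1},\dots,R_{i,t-1}]=\mu_i$ noted earlier, the same estimate follows from an Azuma--Hoeffding martingale bound. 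Getting this union bound clean enough that the resulting sum converges is where the care is needed.

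Assembling the tail then gives
$$\mathbb{E}[T_i(n)]\le \ell+\sum_{t=1}^{\infty}\sum_{s=1}^{t}\sum_{s'=1}^{t}2\,t^{-4}\le \frac{8\log n}{(\mu^*-\mu_i)^2}+1+2\sum_{t=1}^{\infty}t^{-2},$$
and since $\sum_{t\ge 1}t^{-2}=\pi^2/6$ the trailing term equals $\pi^2/3$, completing the per-arm bound and hence, after the substitution described above, the theorem.
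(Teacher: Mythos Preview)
Your argument is correct and is precisely the original proof of Auer, Cesa-Bianchi and Fischer (2002): reduce to bounding $\mathbb{E}[T_i(n)]$ via the pseudo-regret decomposition, threshold at $\ell=\lceil 8\log n/(\mu^*-\mu_i)^2\rceil$, split the ``$i$ is pulled'' event into the three cases (a)--(c), eliminate (c) by the choice of $\ell$, and control (a) and (b) by a union bound over the random sample counts together with Hoeffding (or Azuma--Hoeffding under the martingale hypothesis), yielding the $t^{-4}$ tail and the $\pi^2/3$ constant.

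Note, however, that the present paper does not actually supply a proof of this theorem: it is stated with attribution to Auer \textit{et al.}\ and followed only by the remark that the original proof requires merely $\mathbb{E}[R_{i,t}\mid R_{i,1},\dots,R_{i,t-1}]=\mu_i$. So there is no ``paper's own proof'' to compare against; you have reconstructed the cited reference's argument faithfully.
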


\indent
The proof makes no assumption on the dependency and distribution of asset returns besides $\mathbb{E}[R_{i,t} | R_{i,1},\dots,R_{i,t-1}] = \mu_i$. Therefore, by scaling the values we can achieve optimality. In addition, we can use historical returns and observed returns of unselected assets to further improve the performance, but we do not discuss details here. Let $\boldsymbol{e}_i \in \mathbb{R}^K$ be the vector of a single $1$ on entry $i$ and $0$ on the others. Our single-asset multi-armed bandit portfolio at $t$ chosen according to Eq.~\eqref{singlearm} is
\begin{equation}
    \boldsymbol{\omega}_t^M \hspace{1mm} \stackrel{\text{def}}{=} \hspace{1mm} \boldsymbol{e}_{{I_t}^*}
        \label{singlearmp}
\end{equation}
\indent
Now, let us incorporate risk-awareness into our algorithm by finding the portfolio that achieves the global minimum of the conditional value-at-risk. We define risk measure and associated properties following Artzner \textit{et al.} \cite{artzner} and Bauerle \& Rieder \cite{mdp}.

\begin{definition}
\textit{Let $(\Omega, \mathcal{F}, \mathbb{P})$ be a probability space and denote by $\mathcal{L} (\Omega, \mathcal{F}, \mathbb{P})$ the set of integrable random variables, where any instance of $\mathcal{L} (\Omega, \mathcal{F}, \mathbb{P})$ represents portfolio return. A function $\Psi : \mathcal{L} (\Omega, \mathcal{F}, \mathbb{P}) \to \mathbb{R}$ is called a risk measure}.
\end{definition}

\begin{definition}
\textit{
Let $\Psi$ be a risk measure, we say $\Psi$ is a coherent risk measure if for all $X_1, X_2 \in \mathcal{L} (\Omega, \mathcal{F}, \mathbb{P})$, $c \in \mathbb{R}$, and $d \in \mathbb{R} \cup \{0\}$, it satisfies
\begin{itemize}
    \item{Translation invariance: $\Psi ( X_1 + c ) = \Psi ( X_1 ) - c$}
    \item{Subadditivity: $\Psi (X_1 + X_2) \le \Psi (X_1) + \Psi (X_2)$}
    \item{Positive homogeneity: $\Psi (dX_1 ) = d\Psi (X_1 )$}
    \item{Monotonicity: $X_1 \le X_2 \Rightarrow \Psi (X_1 ) \ge \Psi (X_2 )$}
\end{itemize}
}
\end{definition}

\begin{definition}
\textit{
Let $X \in \mathcal{L} (\Omega, \mathcal{F}, \mathbb{P})$, the risk measure value-at-risk of $X$ at confidence level $\beta \in (0,1)$ is defined as} $$VaR_{\beta} (X) \stackrel{\text{def}}{=} \inf \{x \in \mathbb{R}: \mathbb{P} (x + X < 0) \le 1 - \beta \}$$ \textit{In addition, the risk measure conditional value-at-risk at confidence level $\gamma \in (0,1)$ is defined as} $$CVaR_{\gamma} (X) \stackrel{\text{def}}{=} \frac{1}{1 - \gamma} \int_{\gamma}^{1} VaR_{\beta} (X) \hspace{1mm} d{\beta}$$
\end{definition}

\indent
In literature, the above risk measures are sometimes expressed in terms of the portfolio loss variable, namely positive values represent loss and negative values represent gain. We note that these definitions are equivalent. Intuitively, The value-at-risk denotes the maximum threshold of loss under a certain confidence level, and conditional value-at-risk is the conditional expectation of loss given that it exceeds such a threshold. Although more popularly used in practice, value-at-risk fails certain mathematical properties such as subadditivity, which contradicts with Markowitz's modern portfolio theory and implies diversification may not reduce investment risk. As a result, it is not a coherent risk measure. On the other hand, Pflug \cite{pflug} proves conditional value-at-risk is coherent and satisfies some extra properties such as convexity, monotonicity with respect to first-order stochastic dominance (FSD) and second-order monotonic dominance.

\begin{theorem}
\textit{
(Pflug, 2000) The conditional value-at-risk is a coherent risk measure.
}
\end{theorem}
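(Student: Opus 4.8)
\emph{Proof proposal.} The plan is to verify directly the four defining properties of a coherent risk measure for $CVaR_\gamma$. Three of them---translation invariance, positive homogeneity, and monotonicity---descend from the corresponding properties of $VaR_\beta$ by integrating over $\beta$, so the genuine work lies entirely in subadditivity, which is precisely the axiom that $VaR$ itself fails.

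First I would establish the three easy axioms at the level of $VaR_\beta$. Directly from the definition $VaR_\beta(X) = \inf\{x : \mathbb{P}(x+X<0) \le 1-\beta\}$, the substitution $y = x+c$ gives $VaR_\beta(X+c) = VaR_\beta(X) - c$; for $d>0$ the substitution $x = d x'$ gives $VaR_\beta(dX) = d\,VaR_\beta(X)$, with the degenerate case $d=0$ handled separately; and if $X_1 \le X_2$ almost surely then $\mathbb{P}(x+X_1<0) \ge \mathbb{P}(x+X_2<0)$ forces $VaR_\beta(X_1) \ge VaR_\beta(X_2)$. Because $CVaR_\gamma(X) = \frac{1}{1-\gamma}\int_\gamma^1 VaR_\beta(X)\,d\beta$ is a positive-weighted integral of $VaR_\beta$, each of these identities and inequalities transfers verbatim to $CVaR_\gamma$ by linearity and monotonicity of the integral.

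The hard part is subadditivity, for which the integral representation alone is insufficient since $VaR_\beta$ is not subadditive. My plan is to first establish the Rockafellar--Uryasev variational representation
\begin{equation}
CVaR_\gamma(X) = \inf_{z \in \mathbb{R}} \left\{ z + \frac{1}{1-\gamma}\mathbb{E}\left[ \left( -X - z \right)^+ \right] \right\},
\label{ru}
\end{equation}
which identifies the tail integral of $VaR_\beta$ with the infimum of a convex function whose minimizer is $z^* = VaR_\gamma(X)$. Proving \eqref{ru} is the technical heart of the argument: one differentiates the objective $\phi_X(z) = z + \frac{1}{1-\gamma}\mathbb{E}[(-X-z)^+]$ to locate the minimizer at the quantile, evaluates $\phi_X(z^*)$, and matches it to $\frac{1}{1-\gamma}\int_\gamma^1 VaR_\beta(X)\,d\beta$ via the change of variables $p = 1-\beta$ together with a layer-cake computation of the tail expectation. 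The main obstacle is the case of distributions with atoms, where the quantile need not be unique, so the identification of the minimizer and the evaluation of the tail expectation require extra care.

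Granting \eqref{ru}, subadditivity becomes immediate. For any $z_1, z_2 \in \mathbb{R}$ the elementary inequality $(a+b)^+ \le a^+ + b^+$ applied to $a = -X_1 - z_1$ and $b = -X_2 - z_2$ yields $\phi_{X_1+X_2}(z_1+z_2) \le \phi_{X_1}(z_1) + \phi_{X_2}(z_2)$, where $\phi_X$ denotes the objective in \eqref{ru}. Since $\{z_1 + z_2 : z_1, z_2 \in \mathbb{R}\} = \mathbb{R}$, taking the infimum over $z_1$ and $z_2$ separately gives $CVaR_\gamma(X_1+X_2) \le CVaR_\gamma(X_1) + CVaR_\gamma(X_2)$. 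An alternative route that delivers all four axioms at once is the dual (envelope) representation $CVaR_\gamma(X) = \sup_{Q}\mathbb{E}_Q[-X]$ over a suitable family of measures $Q \ll \mathbb{P}$ with bounded density; coherence then follows because a supremum of affine functionals is automatically subadditive, positively homogeneous, monotone, and translation invariant. I would nonetheless favor the variational route \eqref{ru}, as it is more self-contained and avoids invoking convex-duality representation theorems.
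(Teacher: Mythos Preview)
Your proposal is mathematically sound, but there is nothing to compare it against: the paper does not supply a proof of this theorem. It is stated as a cited result attributed to Pflug (2000), with the surrounding text simply noting that ``Pflug proves conditional value-at-risk is coherent and satisfies some extra properties such as convexity\ldots''. No argument is given in the paper itself.

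That said, your plan is the standard one and fits the paper's framework neatly. The variational identity you single out as the key step,
\[
CVaR_\gamma(X) \;=\; \inf_{z \in \mathbb{R}} \Bigl\{ z + \tfrac{1}{1-\gamma}\,\mathbb{E}\bigl[(-X-z)^+\bigr] \Bigr\},
\]
is exactly the content underlying the paper's Theorem~2.3 (Rockafellar \& Uryasev), and the performance function $F_\gamma(\boldsymbol{u},\alpha)$ used there is precisely the objective $\phi_X$ in your notation. So your route to subadditivity via $(a+b)^+ \le a^+ + b^+$ is not only correct but is the most natural argument given the tools the paper already has on the table. The three remaining axioms do indeed follow by integrating the corresponding $VaR$ properties over $\beta$, as you outline. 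Your caveat about atoms in the distribution when identifying the minimizer is apt and is the only place requiring genuine care; the Rockafellar--Uryasev paper handles this, so you could either reproduce that analysis or cite it.
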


\indent
Therefore, we would like to minimize risk using the conditional value-at-risk at confidence level $\gamma$ as the risk measure. We recall that $W = \{\boldsymbol{u} \in \mathbb{R}_{+}^{K} : \boldsymbol{u}^\top \boldsymbol{1} = 1\}$ is the set of possible portfolios. At each trial $t$, the learner would like to solve the following optimization problem
$$
\begin{aligned}
& \underset{\boldsymbol{u} \in W}{\text{minimize}}
& & CVaR_{\gamma} (\boldsymbol{u}^\top \boldsymbol{R}_t )
\end{aligned}
$$
Note that as $\gamma \to 0$, the problem becomes minimizing expected loss and as $\gamma \to 1$ it becomes minimizing the worst outcome. In this study we use $\gamma = 0.95$. Rockafellar \& Uryasev \cite{uryasev} provides a convenient method to solve this problem. Recall that we assume both historical returns and present returns follow the same distribution, let $p(\boldsymbol{R}_t)$ be the density. Define the performance function as $$F_\gamma (\boldsymbol{u},\alpha) \hspace{1mm} \stackrel{\text{def}}{=} \hspace{1mm} \alpha + \frac{1}{1 - \gamma}\int_{\boldsymbol{\boldsymbol{R}_t} \in \mathbb{R}^K} {\left[ - \boldsymbol{u}^\top \boldsymbol{R}_t - \alpha \right] }^{+} p(\boldsymbol{R}_t) \hspace{1mm} d\boldsymbol{R}_t$$ where $[m]^+ \stackrel{\text{def}}{=} \max\{m,0\}$. We have the following theorem proved in Rockafellar \& Uryasev \cite{uryasev}.

\begin{theorem}
\textit{
(Rockafellar \& Uryasev, 2000) The minimization of $\hspace{1mm} CVaR_\gamma (\boldsymbol{u}^\top \boldsymbol{R}_t )$ over $\boldsymbol{u} \in W$ is equivalent to the minimization of $F_\gamma (\boldsymbol{u},\alpha )$ over all pairs of $(\boldsymbol{u},\alpha ) \in W \times \mathbb{R}$. Moreover, since $F_\gamma (\boldsymbol{u},\alpha )$ is convex with respect to $(\boldsymbol{u},\alpha )$, the loss function $-\boldsymbol{u}^\top \boldsymbol{R}_t$ is convex with respect to $\boldsymbol{u}$ and $W$ is a convex set due to linearity, the minimization of $F_\gamma (\boldsymbol{u},\alpha )$ is an instance of convex programming.
}
\end{theorem}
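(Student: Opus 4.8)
The plan is to establish the claimed equivalence in two stages: first fix the portfolio $\boldsymbol{u}$ and minimize $F_\gamma(\boldsymbol{u},\alpha)$ over the scalar $\alpha$ alone, showing that its minimal value equals $CVaR_\gamma(\boldsymbol{u}^\top\boldsymbol{R}_t)$; then minimize over $\boldsymbol{u}\in W$ and package the result as a convex program. It is convenient to pass to the loss variable $Z=-\boldsymbol{u}^\top\boldsymbol{R}_t$, whose distribution $F_Z$ is induced by the density $p$. The point of this change is that the value-at-risk of Definition~2.3 unwinds into a quantile of $Z$: since $\mathbb{P}(x+\boldsymbol{u}^\top\boldsymbol{R}_t<0)=\mathbb{P}(Z>x)$, the defining condition $\mathbb{P}(x+X<0)\le 1-\beta$ is equivalent to $F_Z(x)\ge\beta$, so that $VaR_\beta(\boldsymbol{u}^\top\boldsymbol{R}_t)=F_Z^{-1}(\beta)$ is precisely the $\beta$-quantile of the loss, and $F_\gamma(\boldsymbol{u},\alpha)=\alpha+\tfrac{1}{1-\gamma}\mathbb{E}[(Z-\alpha)^+]$.

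For the inner minimization I would differentiate in $\alpha$. Because $p$ exists, $\alpha\mapsto\mathbb{E}[(Z-\alpha)^+]=\int_\alpha^\infty (z-\alpha)\,dF_Z(z)$ is differentiable, and the Leibniz rule gives $\partial_\alpha\mathbb{E}[(Z-\alpha)^+]=-(1-F_Z(\alpha))$, the boundary term vanishing since the integrand is zero at $z=\alpha$. Hence
$$\partial_\alpha F_\gamma(\boldsymbol{u},\alpha)=1-\frac{1-F_Z(\alpha)}{1-\gamma},$$
which vanishes exactly when $F_Z(\alpha)=\gamma$, i.e.\ at $\alpha^*=VaR_\gamma(\boldsymbol{u}^\top\boldsymbol{R}_t)$; by the convexity established below this $\alpha^*$ is a global minimizer. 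Substituting it and using $\mathbb{P}(Z>\alpha^*)=1-\gamma$ yields
$$F_\gamma(\boldsymbol{u},\alpha^*)=\frac{1}{1-\gamma}\mathbb{E}\bigl[Z\,\mathbb{1}\{Z>\alpha^*\}\bigr]=\mathbb{E}[Z\mid Z>\alpha^*].$$
I would then reconcile this with the integral definition of $CVaR_\gamma$ through the change of variables $\beta=F_Z(z)$ in $\tfrac{1}{1-\gamma}\int_\gamma^1 F_Z^{-1}(\beta)\,d\beta$, which maps $[\gamma,1)$ onto $[\alpha^*,\infty)$ and reproduces the same quantity, giving $\min_\alpha F_\gamma(\boldsymbol{u},\alpha)=CVaR_\gamma(\boldsymbol{u}^\top\boldsymbol{R}_t)$ for every $\boldsymbol{u}$.

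The outer step is then formal. Joint convexity of $F_\gamma$ follows by composition: for each realization of $\boldsymbol{R}_t$ the map $(\boldsymbol{u},\alpha)\mapsto -\boldsymbol{u}^\top\boldsymbol{R}_t-\alpha$ is affine, $[\cdot]^+$ is convex and nondecreasing, so the integrand is convex in $(\boldsymbol{u},\alpha)$; integrating against $p\ge 0$ and adding the linear term $\alpha$ preserves convexity. Minimizing in $\alpha$ first gives
$$\min_{\boldsymbol{u}\in W} CVaR_\gamma(\boldsymbol{u}^\top\boldsymbol{R}_t)=\min_{\boldsymbol{u}\in W}\min_{\alpha\in\mathbb{R}}F_\gamma(\boldsymbol{u},\alpha)=\min_{(\boldsymbol{u},\alpha)\in W\times\mathbb{R}}F_\gamma(\boldsymbol{u},\alpha),$$
and the $\boldsymbol{u}$-component of any minimizer of the right-hand side minimizes the left-hand side. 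As $W\times\mathbb{R}$ is convex and $F_\gamma$ is convex, the joint minimization is an instance of convex programming.

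I expect the main obstacle to be the second step together with the regularity underlying the first: the clean identity $\min_\alpha F_\gamma=CVaR_\gamma$ with minimizer $VaR_\gamma$ hinges on $F_Z$ being continuous, which is exactly what the density assumption on $p$ secures. Without a density, the set of minimizing $\alpha$ broadens into a closed interval whose left endpoint is $VaR_\gamma$, and both the vanishing boundary term in the derivative and the change of variables in the value identification must be replaced by one-sided derivatives and generalized quantile inverses; handling that degenerate case carefully is the genuinely delicate part of the original Rockafellar--Uryasev argument, which I would either treat via the atomless assumption here or cite directly.
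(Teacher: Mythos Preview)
Your argument is correct and is essentially the standard Rockafellar--Uryasev proof: reduce to the loss variable, minimize in $\alpha$ first via the derivative $\partial_\alpha F_\gamma=1-(1-F_Z(\alpha))/(1-\gamma)$ to identify $\alpha^*=VaR_\gamma$ and $\min_\alpha F_\gamma=CVaR_\gamma$, then exchange the order of minimization and read off convexity from the composition of $[\cdot]^+$ with an affine map. Your discussion of the atomless case versus the general quantile interval is also accurate and matches the care taken in the original reference.

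The only point of comparison worth flagging is that the paper does not actually prove this theorem at all: it states the result and attributes it to Rockafellar \& Uryasev \cite{uryasev} without reproducing any argument. So your proposal goes well beyond what the paper does, supplying the proof that the paper merely cites. If matching the paper is the goal, a one-line citation suffices; if a self-contained argument is desired, what you have written is exactly the right one, and your caveat about relying on the density $p$ (which the paper explicitly assumes just above the theorem) correctly isolates the step where that hypothesis is used.
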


\indent
Moreover, since the density $p(\boldsymbol{R}_t)$ is unknown, we would like to approximate the performance function using not only historical returns but also knowledge gained as we proceed in this learning process. From the received $H_{i,1},\dots,H_{i,\delta}$ for all $i$, we extract historical returns of our $K$ assets $\boldsymbol{H}_1,\dots,\boldsymbol{H}_\delta \in \mathbb{R}^K$. Let $\boldsymbol{R}_1,\dots,\boldsymbol{R}_{t-1}$ be the $t-1$ trials of returns observed so far, then our approximation of $F_\gamma (\boldsymbol{u},\alpha)$ at trial $t$ is the following convex and piecewise linear function

\begin{equation}
\widetilde{F}_\gamma (\boldsymbol{u},\alpha,t) \stackrel{\text{def}}{=} \alpha + \frac{1}{(\delta + t - 1) (1 - \gamma)} \Bigg[ \sum_{s = 1}^{\delta} {\big[ - \boldsymbol{u}^\top \boldsymbol{H}_s - \alpha \big] }^{+} + \sum_{s = 1}^{t - 1} {\big[ - \boldsymbol{u}^\top \boldsymbol{R}_s - \alpha \big] }^{+} \Bigg].
\label{riskaware}
\end{equation}
Notice that the approximation function is implicitly also a function of the current trial $t$, hence we have added an extra parameter and denote it as $\widetilde{F}_\gamma (\boldsymbol{u},\alpha, t)$. As the learner proceeds in time, she accumulates data information and obtains an increasingly more precise approximation. As a result, the minimization of conditional value-at-risk is solved by convex programming and generates the following optimal solution. At each trial $t$, the risk-aware portfolio constructed according to Eq.~\eqref{riskaware} is
\begin{equation}
    \boldsymbol{\omega}_t^C \hspace{1mm} \stackrel{\text{def}}{=} \hspace{1mm} \argminA\limits_{(\boldsymbol{u},\alpha ) \in W \times \mathbb{R}} \widetilde{F}_\gamma (\boldsymbol{u},\alpha,t)
    \label{riskawarep}
\end{equation}

\indent
Now we have found both the single-asset multi-armed bandit portfolio by~\eqref{singlearmp} and the risk-aware portfolio by~\eqref{riskawarep}. Notice that they are dynamic and update based on the learner's accumulated knowledge. For each trial $t$, the learner combines them with a factor $\lambda \in [0,1]$ to form the balanced portfolio
\begin{equation}
    \boldsymbol{\omega}_t^* \hspace{1mm} \stackrel{\text{def}}{=} \hspace{1mm} \lambda \boldsymbol{\omega}_t^M + (1-\lambda)\boldsymbol{\omega}_t^C
    \label{combinedp}
\end{equation}
In particular, $\lambda$ is the proportion of wealth invested in the single-asset multi-armed bandit portfolio and  $1-\lambda$ is the proportion invested in the risk-aware portfolio. The value of $\lambda$ denotes the risk preference of the learner. As $\lambda \to 1$, our algorithm reverts to the UCB1 policy, whereas for $\lambda \to 0$, it becomes the minimization of conditional value-at-risk. Therefore, the commonly discussed trade-off between reward and risk is illustrated here in the choice of $\lambda$. Finally, Algorithm 1 below summarizes our sequential portfolio selection algorithm.

\setcounter{algocf}{0}
\begin{algorithm}[h]
\SetAlgoLined
\KwIn{$K$, $\gamma$, $\lambda$}

 Select $K$ peripheral assets from the market according to Section 3.1\;

 \For{$t = 1,\dots,N$}{
  Compute the single-asset multi-armed bandit portfolio $\boldsymbol{\omega}_t^M$ by~\eqref{singlearmp}\;
  
  Compute the risk-aware portfolio $\boldsymbol{\omega}_t^C$ at confidence level $\gamma$ by~\eqref{riskawarep}\;
  
  Select the combined portfolio $\boldsymbol{\omega}_t^*$ with a factor $\lambda$ by~\eqref{combinedp}\;
  
  Observe returns $\boldsymbol{R}_t$ and update accumulated knowledge for~\eqref{singlearmp}and~\eqref{riskawarep}\;
  
  Receive portfolio reward ${\boldsymbol{\omega}_t^*}^\top \boldsymbol{R}_{t}$\;

 }
 \caption{Our Proposed Sequential Portfolio Selection Algorithm}
\end{algorithm}

\section{Results}
\indent

In this Section, we design experiments and report the performance of the proposed algorithm (see Algorithm 1) in comparison with several benchmarks. 

\subsection{Monte Carlo Simulation Method}
\indent

For simplicity, we consider stocks as our assets and adopt the Black-Scholes model \cite{blackscholes} to simulate stock prices as geometric Brownian motion (GBM) paths. As a Nobel Prize winning model, it provides a partial differential equation to price an European option by computing the initial wealth for perfectly hedging a short position in that option. The underlying asset, usually a stock, is modeled to follow a geometric Brownian motion. Although this assumption may not hold perfectly in reality, it provides an extremely convenient and popularly used method to simulate any number of stock paths. For our purpose, since we never make any assumption on the dependency of asset returns, we consider the general case where stock paths can be correlated as it is almost always the case in financial market. We use definitions similar to Chapter 4 of Shreve \cite{shreve} and describe our method below.

\begin{definition}
\textit{
Let $(\Omega, \mathcal{F}, \mathbb{P})$ be a probability space. The stock price $P_i (t)$ is said to follow a geometric Brownian motion if it satisfies the following stochastic differential equation  $$d P_i (t) = {\alpha_i} P_i (t) \hspace{0.5mm} dt + {\sigma_i} P (t) \hspace{0.5mm} dW_i(t)$$ where $W_i(t)$ is a Brownian motion, $\alpha_i$ is drift and $\sigma_i$ is volatility.
}
\end{definition}

\begin{definition}
\textit{
Two stock paths $P_i (t)$ and $P_j (t)$ modeled by geometric Brownian motions are correlated if their associated Brownian motions satisfy $$dW_i(t) \hspace{0.5mm} dW_j(t) = \rho_{i,j} \cdot dt$$ for some nonzero constant $\rho_{i,j} \in [-1,1]$ where $\rho_{i,i} = \rho_{j,j} = 1$.
}
\end{definition}

\begin{proposition}
\textit{
For two correlated stock prices $P_i(t)$ and $P_j(t)$ that satisfy $dW_i(t) dW_j(t) = \rho_{i,j} \cdot dt$, the following properties hold:
\begin{itemize}
    \item{$\mathbb{E}[W_i(t) W_j(t)] = \rho_{i,j} \cdot t$}
    \item{$\Cov[W_i(t),W_j(t)] = \rho_{i,j} \cdot t$}
    \item{$\Cov[\sigma_i W_i(t),\sigma_j W_j(t)] = \sigma_i \sigma_j \rho_{i,j} \cdot t$}
\end{itemize}
where $\sigma_i$ and $\sigma_j$ are volatility parameters of $P_i(t)$ and $P_j(t)$ respectively.
}
\end{proposition}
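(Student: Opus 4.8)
The plan is to establish the three identities in sequence, deriving the first from stochastic calculus and then obtaining the other two as immediate algebraic consequences. The central tool is It\^o's product rule (integration by parts), which converts the given infinitesimal relation $dW_i(t)\,dW_j(t) = \rho_{i,j}\,dt$ into a statement about the product process $W_i(t)W_j(t)$.

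First I would apply It\^o's product rule to $W_i(t)W_j(t)$. Since each $W_i$ is a standard Brownian motion with zero drift and unit diffusion, and using the hypothesized cross-variation $dW_i\,dW_j = \rho_{i,j}\,dt$, one obtains
$$d\big(W_i(t)W_j(t)\big) = W_i(t)\,dW_j(t) + W_j(t)\,dW_i(t) + \rho_{i,j}\,dt.$$
Integrating from $0$ to $t$ and using $W_i(0) = W_j(0) = 0$ gives
$$W_i(t)W_j(t) = \int_0^t W_i(s)\,dW_j(s) + \int_0^t W_j(s)\,dW_i(s) + \rho_{i,j}\,t.$$
Taking expectations, the two It\^o integrals are martingales started at zero and hence have zero mean, which yields $\mathbb{E}[W_i(t)W_j(t)] = \rho_{i,j}\,t$ and establishes the first property.

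The remaining two properties follow without further stochastic analysis. For the second, I would use that a standard Brownian motion has mean zero, so $\mathbb{E}[W_i(t)] = \mathbb{E}[W_j(t)] = 0$, and therefore $\Cov[W_i(t), W_j(t)] = \mathbb{E}[W_i(t)W_j(t)] - \mathbb{E}[W_i(t)]\,\mathbb{E}[W_j(t)] = \rho_{i,j}\,t$. For the third, bilinearity of covariance pulls out the deterministic constants $\sigma_i,\sigma_j$, giving $\Cov[\sigma_i W_i(t), \sigma_j W_j(t)] = \sigma_i\sigma_j\,\Cov[W_i(t), W_j(t)] = \sigma_i\sigma_j\rho_{i,j}\,t$.

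The main obstacle---really the only non-mechanical point---is justifying that the two It\^o integral terms vanish in expectation; this rests on the martingale property of It\^o integrals with square-integrable integrands, which holds here because $\mathbb{E}\int_0^t W_i(s)^2\,ds = \int_0^t s\,ds < \infty$. An alternative that sidesteps It\^o's formula altogether is to realize the correlated pair explicitly as $W_j = \rho_{i,j}W_i + \sqrt{1-\rho_{i,j}^2}\,\widetilde{W}$ with $\widetilde{W}$ an independent standard Brownian motion; one then checks that this representation reproduces $dW_i\,dW_j = \rho_{i,j}\,dt$ and computes $\mathbb{E}[W_i(t)W_j(t)]$ directly from $\mathbb{E}[W_i(t)^2]=t$ and independence, recovering the first identity with only elementary moment computations.
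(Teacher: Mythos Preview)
Your proof is correct and follows essentially the same route as the paper: apply the It\^o product rule (the paper calls it the It\^o--Doeblin formula) to $W_i(t)W_j(t)$, integrate, and take expectations using the martingale property of the It\^o integrals to obtain the first identity, with the remaining two following immediately. Your added justification of the square-integrability condition and the alternative explicit-construction argument go beyond what the paper provides, but the core argument is the same.
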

\begin{proof}
We prove the first claim and the rest follow immediately after some computations. By It\^{o}-Doeblin formula, which can be found in Shreve \cite{shreve}, we have
$$
d(W_i(t) W_j(t)) = W_i(t) \hspace{0.5mm} dW_j(t) + W_j(t) \hspace{0.5mm} dW_i(t) + \rho_{i,j} \cdot dt
$$
Integrate on both sides, we have 
$$
W_i(t) W_j(t) = \int_{0}^{t}W_i(t) \hspace{0.5mm} dW_j(t) + \int_{0}^{t}W_j(t) \hspace{0.5mm} dW_i(t) + \rho_{i,j} \cdot t
$$
By the martingale property of it\^{o} integrals, we simply take the expectation on both sides to obtain $\mathbb{E}[W_i(t) W_j(t)] = \rho_{i,j} \cdot t$.
\end{proof}

Recall that we have $K$ stocks whose prices $P_1(t),\dots,P_K(t)$ are modeled by correlated geometric Brownian motions. By definition, they must satisfy the following two equations
\begin{equation}
\frac{dP_i(t)}{P_i(t)} = \alpha_i \hspace{0.5mm} dt + \sigma_i \hspace{0.5mm} dW_i(t),
\label{Gmotion}
\end{equation}
and
\begin{equation}
dW_i(t) \hspace{0.5mm} dW_j(t) = \rho_{i,j} \cdot dt
\label{Gmotion2}
\end{equation}
In particular, the solution to Eq.~\eqref{Gmotion} can be expressed as follows \cite{glasserman}. For any time $u < l$ we have
\begin{equation}
P_i(l) = P_i(u) \cdot \exp\{ (\alpha_i - \frac{1}{2}{\sigma_i}^2)(l-u) + \sigma_i (W_i(l) - W_i(u)) \}
\label{GmotionSolution}
\end{equation}
We first would like to express the scaled correlated Brownian motions $\sigma_i W_i(t)$ using independent ones. By Proposition 3.1, we have the following instantaneous covariance matrix $$ \Theta = 
\begin{bmatrix}
    {\sigma_1}^2 & \sigma_1 \sigma_2 \rho_{1,2}  & \dots  & \sigma_1 \sigma_K \rho_{1,K}  \\
    \sigma_2 \sigma_1 \rho_{2,1}  & {\sigma_2}^2  & \dots  & \sigma_2 \sigma_K \rho_{2,K}  \\
    \vdots & \vdots  & \ddots & \vdots \\
    \sigma_K \sigma_1 \rho_{K,1}  & \sigma_K \sigma_2 \rho_{K,2}   & \dots  & {\sigma_K}^2
\end{bmatrix}
$$
Since $\Theta$ has to be symmetric and positive definite, it has a square root and we apply Cholesky decomposition to find the matrix $A$ such that $AA^T = \Theta$. By Shreve \cite{shreve}, there exists $K$ independent Brownian motions $X_1 (t), \dots, X_K(t)$ such that $$\sigma_i W_i(t) = \sum_{m = 1}^{K} A_{i,m} X_m (t)$$
Then Eq.~\eqref{Gmotion} becomes
\begin{equation}
\frac{dP_i(t)}{P_i(t)} = \alpha_i \hspace{0.5mm} dt + \sum_{m = 1}^{K} A_{i,m} \hspace{0.5mm} dX_m (t)
\end{equation}
and Eq.~\eqref{GmotionSolution} becomes that for any time $u < l$
\begin{equation}
P_i (l) = P_i (u) \exp\{ (\alpha_i - \frac{1}{2}{\sigma_i}^2)(l-u) + \sum_{m =1}^{K} A_{i,m} (X_m(l) - X_m(u)) \}
\label{GmotionSolution2}
\end{equation}
Since each Brownian motion $X_m (t)$ for $m \in [1,K]$ above is independent and the increment $X_m (l) - X_m(u)$ is Gaussian with mean $0$ and variance $l - u$, let $\boldsymbol{Z}(t) = (Z_1(t),\dots,Z_K(t))^\top$ be standard multivariate Gaussian then Eq.~\eqref{GmotionSolution2} becomes
\begin{equation}
P_i (l) = P_i (u) \exp\{ (\alpha_i - \frac{1}{2}{\sigma_i}^2)(l-u) + \sqrt{l-u} \sum_{m =1}^{K} A_{i,m} Z_m(l) \}
\label{GmotionSolution3}
\end{equation}
Therefore, at each time we can conveniently generate a sample from $\boldsymbol{Z}(t)$ to compute the price increment. Specifically, Eq.~\eqref{GmotionSolution3} leads to the following recursive algorithm that can also be found in Glasserman \cite{glasserman}. For $0 = t_0 < t_1 < \cdots < t_{\infty}$ we have $$P_i(t_{s+1}) = P_i(t_s) \cdot \exp\{ (\alpha_i - \frac{1}{2}{\sigma_i}^2)(t_{s+1}-t_s) + \sqrt{t_{s+1} - t_s} \sum_{m =1}^{K} A_{i,m} Z_m(t_{s+1}) \}$$ Also notice that when the paths are independent, $dW_i(t) \hspace{0.5mm} dW_j(t) = \boldsymbol{\delta}_{i,j} \hspace{0.5mm} dt$ where $\boldsymbol{\delta}_{i,j}$ is the Kronecker delta function, and the covariance matrix $\Theta$ is diagonal. In this special case, it is equivalent to compute $K$ paths separately in the one-dimensional space. For our purpose, we first find some appropriate covariance matrix and generate $K$ price paths following the above algorithm. We then uniformly divide the total time horizon into $\delta + N$ trials and use the price at the beginning and end of each trial to calculate return, which is defined earlier as the log price ratio. We run our sequential portfolio selection algorithm on these data and compare the performance with four benchmark portfolios, namely UCB1~\eqref{singlearmp}, risk-aware portfolio~\eqref{riskawarep}, $\epsilon$-greedy and the equally-weighted portfolio.

\subsection{Simulation results}
\indent
\begin{figure}[!htb]
    \centering
    \includegraphics[scale = 0.76]{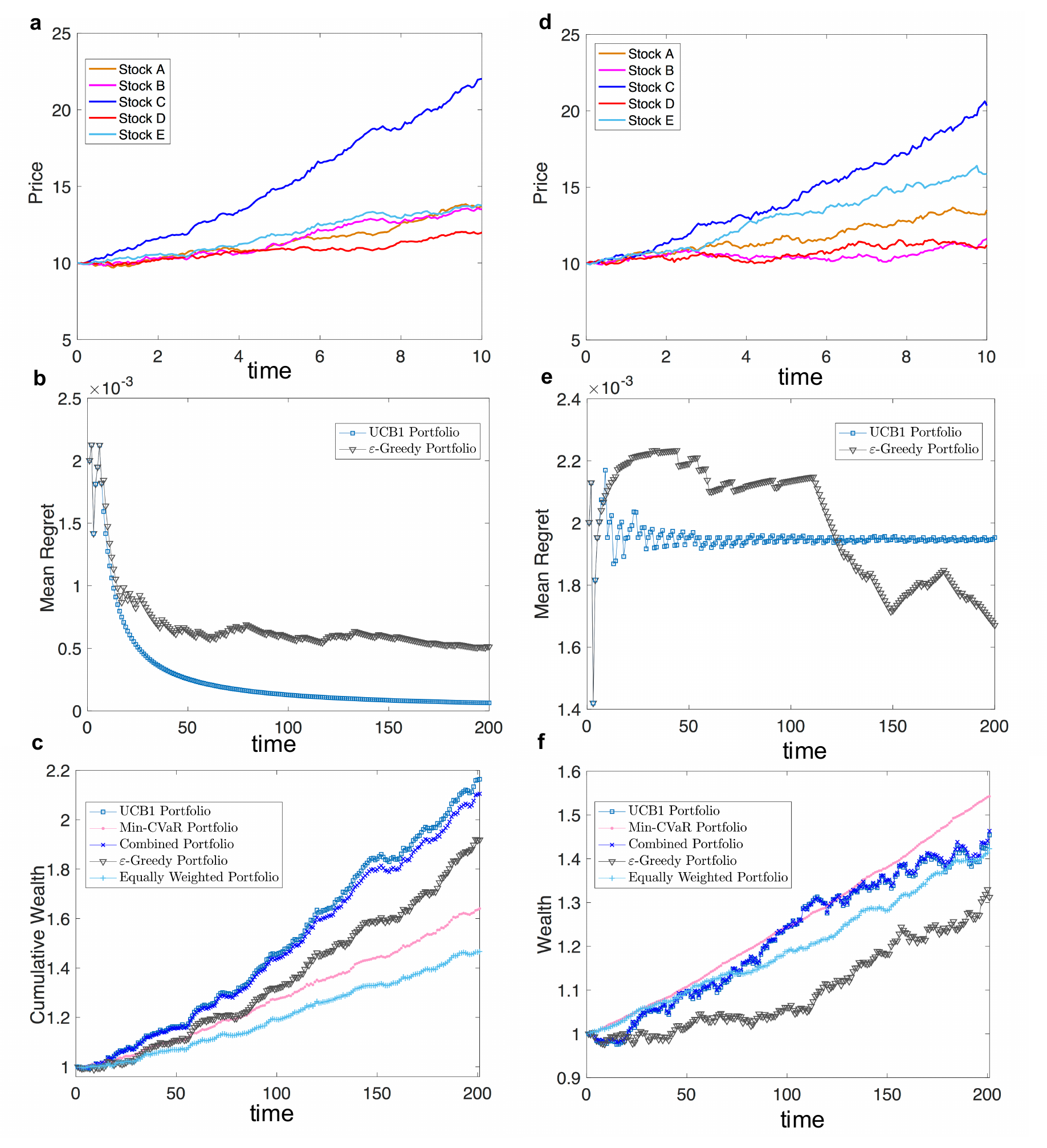}
    \caption{Combined sequential portfolio selection algorithm can achieve a balance between risk and return. Panel (a) and (c) show the simulated stock paths based on the geometric Brownian motion. Panel (b) and (d) plot the performance of two portfolio selection algorithms, UCB1 vs. $\varepsilon$-Greedy. Panel (c) and (e) compare the cumulative wealth obtained with our sequential portfolio selection algorithm that combines the single-asset multi-armed bandit portfolio by~\eqref{singlearmp} and the risk-aware portfolio by~\eqref{riskawarep} with the other four benchmarks of portfolio selection algorithms. To quantify and compare the role of volatility in the performance of portfolio selection algorithms, we present the simulation results of low volatility in left panels (a)(b)(c) and high volatility in right panels (d)(e)(f). Parameters: The same vector $(0.04, 0.035, 0.08, 0.02, 0.03)$ for drift terms $\alpha_i$ is used for simulating the stock paths in (a) and (d). For each trial, the volatility terms $\sigma_i$ are uniformly and randomly generated from the interval $[0.02,0.025]$ in (a) and from the interval $[0.03,0.035]$ in (d). $\lambda = 0.9$. }
\end{figure}

After we repeatedly generate price paths and compare the performance, we can see the results agree well with our prediction (Fig. 2). The UCB1 portfolio almost always achieves the most cumulative wealth but has high variations in its path. On the other hand, the risk-aware portfolio achieves a relatively low cumulative wealth but also has low variations. As a result, our combined portfolio achieves a middle ground between the two extremes of maximizing reward and minimizing risk. For example, Fig.~2a-2c illustrate a typical simulation, where Fig.~2a shows $K = 5$ geometric Brownian motion paths, Fig.~2b shows the optimality of UCB1 compared to $\epsilon$-greedy, Fig.~2c shows the cumulative wealth at the end of $N = 200$ trials. With an initial wealth of 1 and $\lambda = 0.9$, the cumulative wealth is 2.1615 for UCB1, 2.1024 for combined portfolio, 1.9168 for $\epsilon$-greedy, 1.6355 for risk-aware portfolio, and 1.4640 for the equally-weighted portfolio.

In addition, we observe that when the market is volatile and when different stock paths are similar in expectation, it takes more trials for the UCB1 policy to reach optimality(Fig.~2d-2f). In this case, the risk-aware portfolio achieves the most cumulative wealth with a similarly low variation in its path. Different from the simulation presented in Fig.~2a-2c, where the volatility parameters of geometric Brownian motions are bounded in the interval $[0.02, 0.025]$, we now choose values from the interval $[0.03, 0.035]$ for Fig.~2d-2f. Specifically, Fig.~2d-2f demonstrate such a simulation, where Fig.~2d shows the geometric Brownian motion paths, Fig.~2e shows the suboptimality of UCB1, and Fig.~2f shows the cumulative wealth at the end of 200 trials. With an initial wealth of 1 and $\lambda = 0.9$, the cumulative wealth is 1.5412 for risk-aware portfolio, 1.4409 for combined portfolio, 1.4294 for UCB1, 1.4132 for the equally-weighted portfolio, and finally 1.3298 for $\epsilon$-greedy.

From the above discussion, it is evident that the value of $\lambda$ is vital to the performance of our sequential portfolio selection algorithm and should be determined based on market condition. In particular, Way \textit{et al.} \cite{way} discusses the trade-off between specialization to achieve high rewards and diversification to hedge against risk, and similarly shows that such choice depends on the underlying parameters and initial conditions.

\section{Discussion and Conclusions}

In this paper, we have studied the multi-armed bandit problem as a mathematical model for sequential decision making under uncertainty. In particular, we focus on its application in financial markets and construct a sequential portfolio selection algorithm. We first apply graph theory and select the peripheral assets from the market to invest. Then at each trial, we combine the optimal multi-armed bandit policy with the minimization of a coherent risk measure. By adjusting the parameter, we are able to achieve the balance between maximizing reward and minimizing risk. We adopt the Black-Scholes model to repeatedly simulate stock paths and observe the performance of our algorithm. We conclude that the results agree well with our prediction when the market is stable. In addition, when the market is volatile, risk-awareness becomes more crucial to achieving high performance. Therefore, parameter selection should be based on the market condition.
 
\indent
For future research, one may consider the optimal selection of the parameter $\lambda$ for combining the two portfolios. One may also consider portfolio selection strategies based on the Markov Decision Process, which is a generalization of the multi-armed bandit to multiple states. In addition, one may pay more attention to a chaotic market environment where stock paths can be affected by various factors instead of simply following a stochastic process. For example, Junior \& Mart \cite{news} uses random matrix theory and transfer entropy to show news articles can possibly affect the market. Finally, one may consider transaction costs and market liquidity. For example, Reiter \textit{et al.} \cite{auction} illustrates the trade-off between reward and cost in a biological auction setting and might provide some important insights for the researcher.

\section*{Data Availability}
The datasets generated during and/or analysed during the current study are available from the corresponding author on reasonable request.

\section*{Acknowledgements}
X.H. is thankful for financial support from the National Science Foundation and Dartmouth College. F.F. is grateful for support from the Dartmouth Faculty Startup Fund, Walter \& Constance Burke Research Initiation Award, NIH under grant no. C16A12652 (A10712), and DARPA under grant no. D17PC00002-002.

\section*{Author Contributions}
X.H. \& F.F. conceived the project, X.H. performed analyses and simulations, X.H. \& F.F. analyzed results, and X.H. wrote the first draft of main text. All authors reviewed the manuscript.

\section*{Additional Information}
The authors declare no competing financial interests.


\end{document}